\def\BibTeX{{\rm B\kern-.05em{\sc i\kern-.025em b}\kern-.08emT\kern-.1667em\lower.7ex\hbox{E}\kern-.125emX}}
\DeclareMathOperator{\rel}{rel}
\DeclareMathOperator{\rank}{\mathrm{rk}}
\DeclareMathOperator*{\argmax}{argmax}
\newcommand{\algname}{{\sc AllPairs}}
\newcommand{\swapprob}{q}
\newtheorem{prop}{Proposition}
\begin{document}

\title{Intervention Harvesting for Context-Dependent Examination-Bias Estimation}

\author{Zhichong Fang}
\affiliation{
    Tsinghua University\\
    Beijing, China}
\email{fzc14@mails.tsinghua.edu.cn} 
    
\author{Aman Agarwal}
\affiliation{
    Cornell University\\   
    Ithaca, NY, USA}
\email{aa2398@cornell.edu}

\author{Thorsten Joachims}
\affiliation{
    Cornell University\\
    Ithaca, NY, USA}
\email{tj@cs.cornell.edu} 

%

%
\begin{abstract}
 Accurate estimates of examination bias are crucial for unbiased learning-to-rank from implicit feedback in search engines and recommender systems, since they enable the use of Inverse Propensity Score (IPS) weighting techniques to address selection biases and missing data. 
Unfortunately, existing examination-bias estimators are limited to the Position-Based Model (PBM), where the examination bias may only depend on the rank of the document. 
To overcome this limitation, we propose a Contextual Position-Based Model (CPBM) where the examination bias may also depend on a context vector describing the query and the user. 
Furthermore, we propose an effective estimator for the CPBM based on intervention harvesting. 
A key feature of the estimator is that it does not require disruptive interventions but merely exploits natural variation resulting from the use of multiple historic ranking functions. 
Real-world experiments on the ArXiv search engine and semi-synthetic experiments on the Yahoo Learning-To-Rank dataset demonstrate the superior effectiveness and robustness of the new approach. 
 
\end{abstract}

%
%
\begin{CCSXML}
    <ccs2012>
    <concept>
    <concept_id>10002951.10003317.10003338.10003343</concept_id>
    <concept_desc>Information systems~Learning to rank</concept_desc>
    <concept_significance>500</concept_significance>
    </concept>
    </ccs2012>
\end{CCSXML}

\ccsdesc[500]{Information systems~Learning to rank}

%
\keywords{examination bias; unbiased learning-to-rank; propensity estimation}

%

%
\maketitle

\section{Introduction}
While implicit feedback (e.g., clicks, dwell time) is an abundant and attractive source of data in most information-retrieval applications (e.g., personal search, email search, recommendation), its use for learning-to-rank (LTR) is challenging due to its biased nature. To address this bias problem, \citet{Joachims/etal/17a} proposed a counterfactual inference approach, providing an unbiased LTR framework via Empirical Risk Minimization. A key requirement for the effectiveness of this approach is an accurate estimate of the examination bias, which describes how likely a user is to discover a particular result. For example, a result is less likely to be discovered at position $10$ than at position $1$. Estimates of the examination bias enable the use of Inverse Propensity Score (IPS) weighting techniques, which make modeling and estimating examination bias equivalent to propensity estimation for unbiased LTR.

There are two key limitations of existing propensity estimation methods for LTR \citep{Agarwal/etal/19, wang2018position, Joachims/etal/17a}. First, existing methods are restricted to the Position-Based Model (PBM) \citep{chuklin2015click}, which only models how examination changes with the rank of the result. Second, existing methods treat all queries uniformly, even though the examination bias is likely to vary from query to query.
For example, users may examine results in navigational queries (i.e., search queries entered with the intention of finding a particular website or webpage) differently compared to informational queries (i.e., search queries for a broad topic for which there could be thousands of relevant results).
To overcome these limitations, a naive approach would be to train a separate PBM for each context -- say one for navigational and one for informational queries -- simply by partitioning the data. However, this is feasible only when there is a small number of discrete contexts, and it does not apply to cases where contexts are described by arbitrary feature vectors. The latter is a highly desirable use case, since it is natural to represent the context by features describing the query (e.g., query length), features describing the candidate set (e.g., size), and features describing the user (e.g., age).

In this paper, we address these limitations of the PBM and present a new Contextual Position-Based Model (CPBM) that greatly extends the expressiveness of the PBM. Instead of having a single examination parameter for each rank that is shared among all queries, we show how the CPBM can model examination dependent on arbitrary context vectors through a deep network. Furthermore, we present an \algname\ estimator \citep{Agarwal/etal/19} for learning CPBM models from log data. For training, our estimator harvests implicit interventions that are already available in most operational systems. In particular, the estimator only requires (not necessarily randomized) log data from at least two ranking functions that were deployed on the system in the past. The resulting deep network can then be used to compute context-dependent propensities for LTR algorithms like \citep{Joachims/etal/17a,Agarwal/etal/18b,Agarwal/etal/19b}. We evaluate the fidelity of the CPBM model and the effectiveness of the estimator in real-world experiments on the ArXiv full-text search engine and in semi-synthetic experiments on the Yahoo Learning-to-Rank Challenge dataset \cite{chapelle2011yahoo}.

\section{Related Work}
In most information retrieval systems, large amounts of implicit feedback are logged automatically and serve as an attractive source of training data.
However, it is known that this type of data suffers from various biases due to both the system and the user, such as position bias \citep{Joachims/etal/05a}, presentation bias \citep{o2006modeling} and trust bias \citep{joachims2007evaluating}. 

To handle biases in a principled way, \citet{Joachims/etal/17a} introduced an unbiased learning-to-rank framework, which is a consistent learning approach despite biased feedback.
It relies on IPS weighting first developed in causal inference and survey sampling \citep{Horvitz1952,rosenbaum1983central}.
IPS has been commonly adopted for unbiased evaluation and learning \citep{Agarwal/etal/18b, dudik2011doubly, li2015counterfactual, Swaminathan/Joachims/15c,Agarwal/etal/19b}.
However, because the propensity in the unbiased LTR setting represents the unknown bias with which a user examines a document, this propensity needs to be estimated.

Existing propensity-estimation methods for LTR are based on the Position-Based Model (PBM) \citep{richardson2007predicting}. 
The most effective methods use randomized interventions \citep{wang2016learning, Joachims/etal/17a}, which unfortunately degrade the user's search experience.
To avoid such interventions, \citet{wang2018position} proposed a regression-based Expectation-Maximization (EM) algorithm, and \citet{ai2018unbiased} proposed a learning algorithm that learns propensity models together with the ranking model. 
Unfortunately, both approaches involve learning an accurate relevance model, which is just as hard as the LTR problem itself.
The approach of \citet{Agarwal/etal/19} avoids both randomized interventions and relevance modeling by exploiting click data from multiple loggers as implicit interventions.
In our work, we extend their approach to the Contextual Position-Based Model (CPBM) for improved accuracy.
 
Beyond the PBM, many other click models for ranked search exist. However, they were designed for inferring relevance, not propensities.
One example is the Cascade model \citep{craswell2008experimental}, where users scan documents top-down until a relevant document is found.
Built upon the PBM and the Cascade model, more complex models like UBM \citep{dupret2008user}, DBN \citep{chapelle2009dynamic}, CCM \citep{guo2009click} and CSM \citep{Borisov:2018:CSM:3209978.3210004} were proposed to infer relevance judgments from click logs. It is an open question in how far these models can be adapted for propensity estimation as well.

\section{The Contextual Position-Based Model}

Modeling the examination bias is crucial for learning to rank from implicit feedback, since it confounds the feedback signal. We start by reviewing the Position-Based Model, as it is arguably the simplest model for correcting the examination bias in learning to rank from implicit feedback. As shown by \citet{Joachims/etal/17a}, the parameters of the PBM can serve as propensity estimates, enabling the use of IPS weighting for unbiased learning-to-rank.

The PBM captures that the rank of a result has a strong influence on whether a result is examined (i.e. viewed and evaluated as a prerequisite for any subsequent feedback like a click or a rating) by a user, where higher-ranked results are typically more likely to be examined than results further down the ranking. Suppose that for a particular query $q$, result $d$ is displayed at position $k$. Let $C$ be the random variable corresponding to a user clicking on $d$, and let $E$ be the random variable denoting whether the user examines $d$. Then according to the Position-Based Model \citep{chuklin2015click},
\begin{equation}
    \Pr(C=1|q,d,k) =  \Pr(E=1|k) \rel(q,d),
\end{equation}
where $\rel(q,d) \in \{0,1\}$ is the binary relevance of document $d$ for query $q$. 

While $\Pr(E=1|k)$ can be used as an estimate of the examination propensity \citep{Joachims/etal/17a}, it is a rather simplistic model since it assumes that examination does not vary across queries. However, it is implausible that navigational queries share the same propensity curve with informational queries, and we will validate in our experiments that such dependencies exist in real-world search engines. More broadly, we argue that examination behavior not only varies across queries, but that it varies across contexts $x$ more generally. This context $x$ includes the query itself and features describing the query (e.g., query length), features describing the candidate set (e.g., size), and features describing the user (e.g., age). 
To be able to model these dependencies, we propose a new model -- called the Contextual PBM (CPBM) -- where the examination propensity can depend on the observed context $x$ in addition to the position as follows.
\begin{equation}
    \Pr(C=1|x,d,k) =  \Pr(E=1|k,x) \rel(x,d).
\end{equation}
Since the context $x$ contains all the information about its corresponding query $q$, we can drop the query $q$ from our notation. 
Through its dependence on context $x$, the CPBM can represent different propensity curves $\Pr(E=1|k,x)$ w.r.t. position $k$ for each query context $x$, instead of assuming that all queries share the same examination curve $\Pr(E=1|k)$ like in the PBM.

\section{Estimating CPBM Models} \label{sec:est}
While the increased expressiveness of the CPBM is clearly desirable, it raises several challenges when estimating the model from the data. In particular, instead of just estimating $k_{max}$ scalar parameters $\Pr(E=1|k)$ like in the PBM, where $k_{max}$ is the maximum length of the presented rankings (say $10$ or $20$), the CPBM requires estimating a context-dependent propensity model $\Pr(E=1|k,x)$, which in the following will be represented as a neural network. Furthermore, estimating $\Pr(E=1|k,x)$ is challenging since we typically do not observe ground truth for $\rel(x,d)$ such that it is difficult to attribute the lack of a feedback signal to a lack of examination or a lack of relevance. After reviewing the shortcomings of a naive generative modeling approach in the next subsection, we will exploit the fact that randomized interventions can be used to control for relevance. In particular, we will show how reusing logged click data from multiple ranking functions provides such intervention data for the CPBM under reasonable assumptions, eliminating the need for explicit interventions that affect the user experience. 


\begin{figure*}[tb]
  \centering
  \begin{subfigure}[t]{0.65\columnwidth}
    \centering
    \includegraphics[height=4cm]{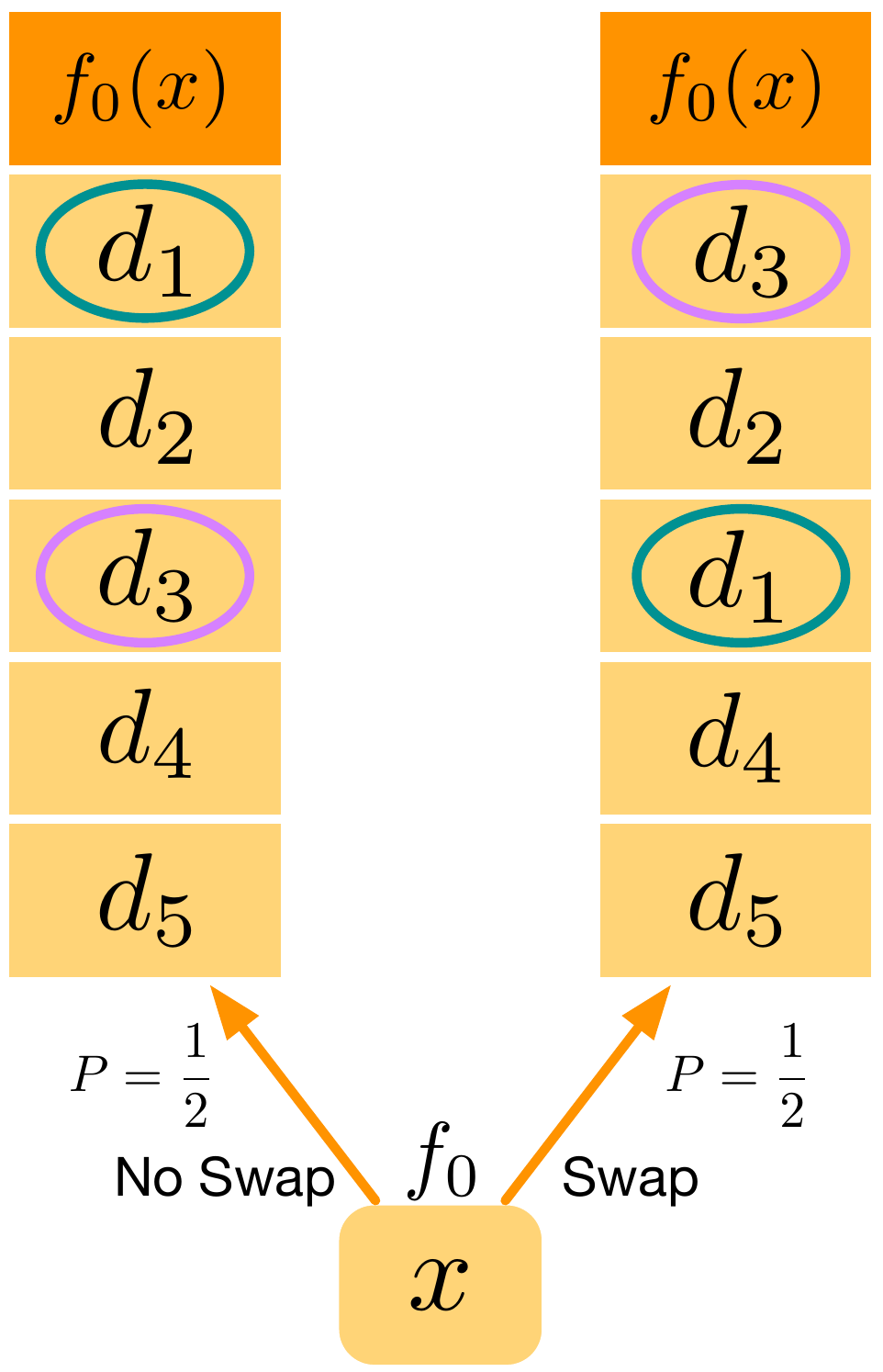}
    \caption{Swap Intervention between positions $1$ and $3$.}
    \label{fig:swap_toy}
  \end{subfigure}
  \begin{subfigure}[t]{0.65\columnwidth}
    \centering
    \includegraphics[height=4cm]{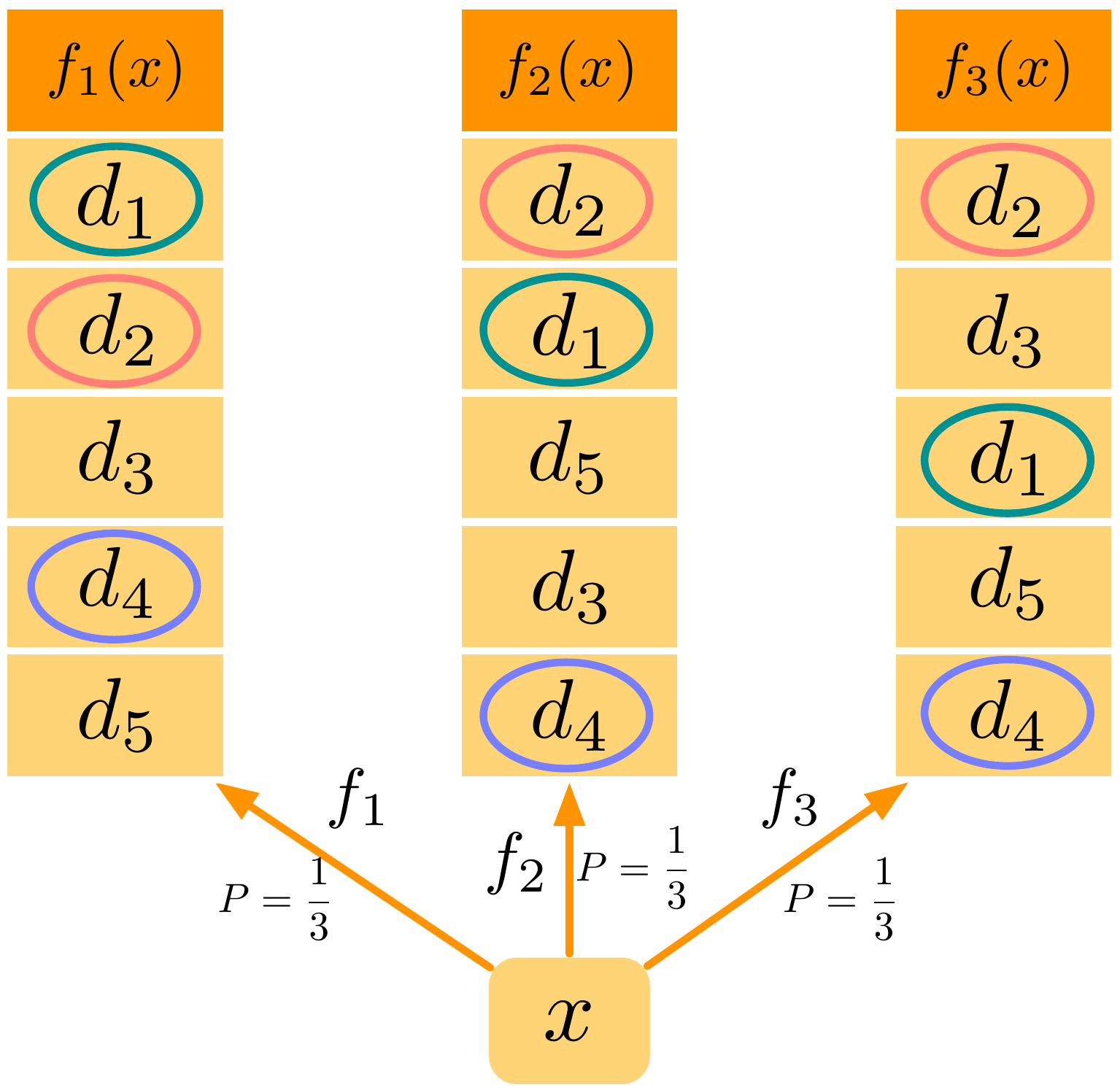}
    \caption{Intervention Harvesting (A/B Test).}
    \label{fig:ab_toy}
  \end{subfigure}
  \begin{subfigure}[t]{0.65\columnwidth}
    \centering
    \includegraphics[height=4cm]{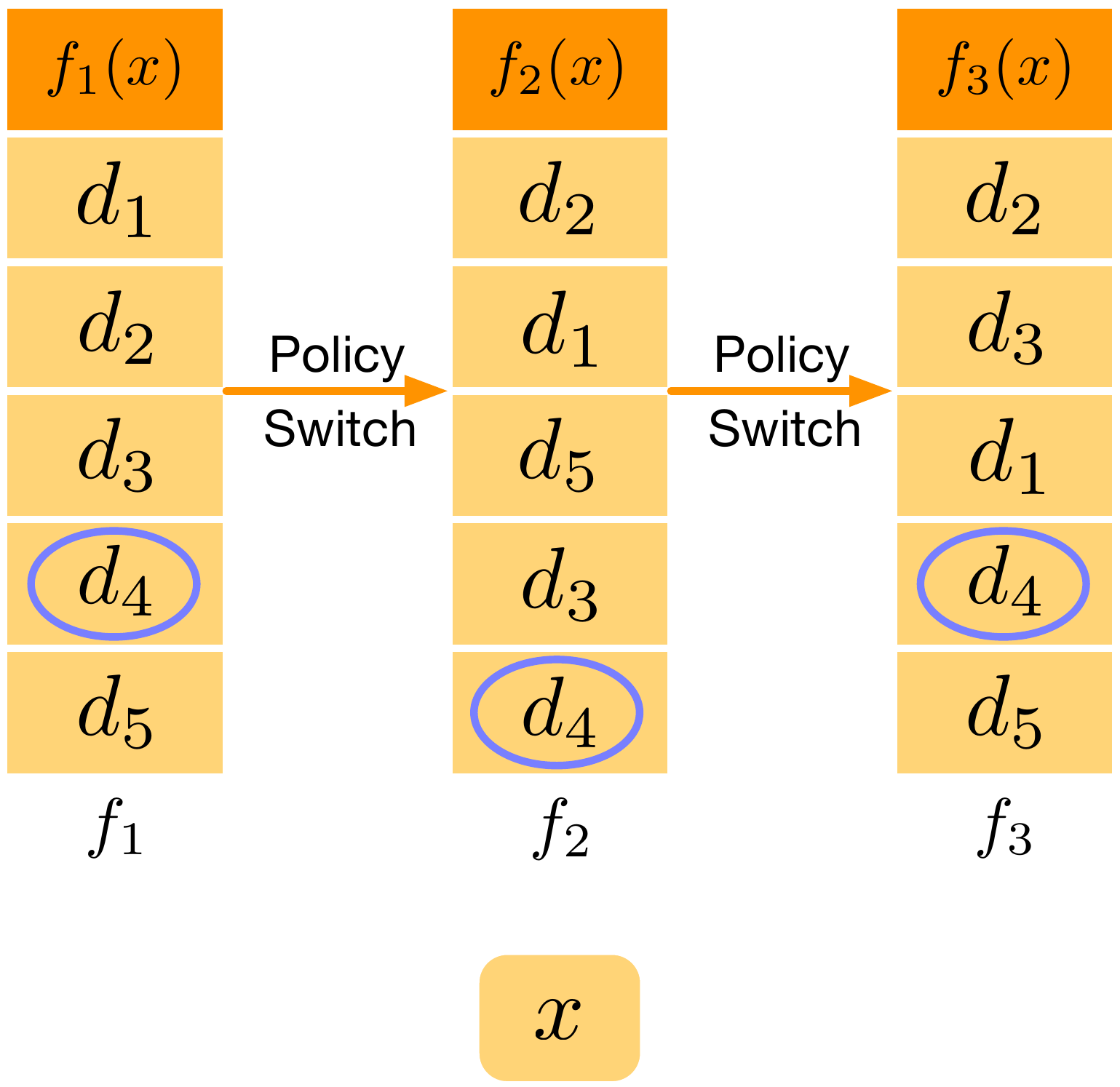}
    \caption{Intervention Harvesting (switch in production rankers).}
    \label{fig:policy_toy}
  \end{subfigure}
 \vspace*{-0.3cm}
 \caption{Illustration of Swap Interventions and of Intervention Harvesting.}
  \label{fig:toy}
\end{figure*}

\subsection{Generative Modeling} 

The first thought one may have is to estimate a CPBM via a standard generative-modeling approach with both examination and relevance as latent variables. In fact, \citet{wang2018position} have proposed such an approach for the simpler problem of estimating the parameters of the PBM. Let $\mathcal{L}=\{(x^j, d^j, k^j, c^j)|j\in[N]\}$ be a sample of $N$ observations with one tuple for each context-document pair $(x^j,d^j)$, indicating with $k^j$ the position of $d^j$ in the ranking and with $c^j \in \{0,1\}$ whether it was clicked. 
Extending the approach of \citet{wang2018position} to the CPBM, the conditional log likelihood objective (conditioned on the observed queries and rankings) is
\begin{align*}
\log \Pr(\mathcal{L}) =  \sum_{j\in\mathcal{L}} \bigg[ & c^j \log\left( p_{k^j}(x^j)r(x^j,d^j)\right)  \\[-0.3cm] &+ (1-c^j) \log\left(1-p_{k^j}(x^j) r(x^j,d^j)\right)\bigg],
\vspace*{-0.1cm}
\end{align*}
where $p_k(x):=\Pr(E=1|k,x)$ is a context-dependent propensity model and $r(x,d):=\rel(x,d)$ is a document-dependent relevance model. Both relevance and examination are latent, and even for the simpler PBM model it was found that the propensity estimates can be far off \cite{Agarwal/etal/19}. A key shortcoming of this approach is that it requires learning the relevance $\rel(x,d)$ of all individual documents without any direct supervision, which is just as difficult as the learning-to-rank problem itself. This means that the relevance model will typically be misspecified and thus bias the propensity estimates.


\subsection{Explicit Swap Interventions} 
\label{sec:swap}

To overcome the need for modeling the unobserved relevance of all query-document pairs, we will employ an interventional approach that controls for relevance at each position. To start, let us first review how explicit interventions have been used for estimating $p_k := \Pr(E=1|k)$ in the PBM \cite{Joachims/etal/17a,wang2016learning}. The PBM requires estimating a single vector $p = [p_1, p_2, ..., p_{k_{max}}]$ with $p_k$ for each position $k \in [1,k_{max}]$. 
In this case, randomly swapping results at positions $k$ and $k'$ before presenting the ranking \citep{Joachims/etal/17a} makes the expected relevance of results at the two positions equal. An illustrative example is given in Figure~\ref{fig:swap_toy} for $k=1$ and $k'=3$.
Through the randomized swap, documents $d_1$ and $d_3$ have a 50\% probability of being presented either at position $k=1$ or at position $k'=3$. So, over the distribution of queries that are subject to this randomized swapping, the distribution of documents in position $k=1$ is identical to the distribution of documents in $k=3$, and thus is their expected relevance. This randomized control for relevance resolves the ambiguity in attributing the lack of clicks to either a lack of relevance or a lack of observation.

More formally, denote with $C_{k,k'}^k$ and $C_{k,k'}^{k'}$ the random variables indicating clicks on positions $k$ and $k'$ respectively for the set of training queries where the results at positions $k$ and $k'$ are swap-randomized with probability $\swapprob=0.5$. Since the results are swapped uniformly, the expected relevance at positions $k$ and $k'$ is controlled to be equal at these positions, and thus expected click-through rates reveal the relative propensities via
$$\frac{p_k}{p_{k'}} = \frac{\mathrm{E}[C_{k,k'}^k]}{\mathrm{E}[C_{k,k'}^{k'}]}.$$
This means that the ratio of the observed click-through rates is a consistent estimator of the relative propensities $p_k$ and $p_{k'}$ under the PBM \citep{Joachims/etal/17a}. Note that knowing the relative propensities with respect to a single ``anchor" position (e.g. $\frac{p_k}{p_{1}}$) is sufficient, since the counterfactual ERM learning objective is invariant to multiplicative scaling \citep{Joachims/etal/17a}. 

While this ratio estimator is a sensible approach for the PBM, it is not directly applicable to the Contextual PBM even if we only need relative propensity estimates. 
In particular, a simple ratio of the observed click-through rates at different ranks will yield  $\frac{\mathrm{E}_x [p_k(x)]}{\mathrm{E}_x [p_{k'}(x)]}$, where the expectation is over contexts. This is not the estimate we seek for the CPBM, since we need estimates of each specific $p_k(x)$ (up to multiplicative scaling) to de-bias clicked examples at position $k$ under context $x$. 
To get such context-dependent propensity estimates, we will introduce a different estimator below. 
Furthermore, we will show how to avoid explicit swap interventions by harvesting implicit interventions. 
As illustrated below, such implicit interventions are typically available in large quantities and do not come at the expense of user experience related to randomly swapping results. 

\subsection{Intervention Harvesting for the CPBM} 


Instead of explicitly swapping results, \citet{Agarwal/etal/19} have recently shown for the PBM how interventions similar to explicit swaps can be harvested from data that is readily available in most operational systems. We will extend this approach to the CPBM and derive an intervention-harvesting estimator for the CPBM that does not require explicit swap interventions, nor does it require a document-specific relevance model that would be difficult to fit. Instead, our estimator merely needs to model how the {\it average relevance} over all queries and documents at a position -- not the context-document specific relevance -- changes with context.


As input for our estimator, suppose we have data from $m$ historic rankers $F=\{f_1,...,f_m\}$. 
Each ranker $f$ maps a query context $x$ to a ranking $f(x)$ of the candidate set of documents. Let $\rank(d|f(x))$ denote the rank of document $d$ in the ranking.
Let $n_i$ be the number of queries that $f_i$ processed, and let $\mathcal{L}=\{(x^j, d^j, k^j, c^j)|j\in[N]\}$ be the aggregated click log over all the rankers, with one tuple for each context-document pair. 
We require that the distribution of contexts is stationary, or specifically that there is no dependency between the context and the choice of ranking function $f_i$ \citep{Langford:2008:ES:1390156.1390223,Agarwal/etal/19},
\begin{equation}
    \forall f_i: \Pr(\mathcal{X}|f_i)=\Pr(\mathcal{X}) \Rightarrow \forall x \in \mathcal{X}: \Pr(f_i|x) = \Pr(f_i).
    \label{eq:cond}
\end{equation}
This condition is fulfilled in at least two situations -- namely in A/B tests and under stationary $\Pr(\mathcal{X})$. In data from A/B tests, where users are randomly assigned to one of the rankers, the condition is fulfilled by design. 
An example is shown in Figure~\ref{fig:ab_toy}. For a given context $x$, the ranking functions $f_1$, $f_2$ and $f_3$ are each chosen completely randomized with equal probability $\frac{1}{3}$. By choosing one of the three rankers, we implicitly conduct a number of interventions. For example, document $d_1$ is randomized to be displayed in positions $1$, $2$, or $3$ with equal probability, and document $d_2$ is displayed in position $1$ with probability $\frac{2}{3}$ and in position $2$ with probability $\frac{1}{3}$. Figure~\ref{fig:policy_toy} shows that a similar randomization holds when the production ranker gets updated from $f_1$ to $f_2$ under stationary $\Pr(\mathcal{X})$. Stationarity implies that the probability of a context $x$ is equal before and after the update, and thus $d_4$ has twice the probability of being shown in position $4$ than in position $5$ in this toy example with 3-time steps. 

To exploit this readily available intervention data for estimating the CPBM, let's first focus on a fixed pair of positions $k, k'$. The key idea of intervention harvesting for the CPBM is to control for the varying average relevance of results displayed in positions $k,k'$ for context $x$ by restricting to the set of queries that, for an appropriate choice of ranker from $F=\{f_1,...,f_m\}$, could have been placed either at $k$ or $k'$. To this effect, we define interventional sets 
\begin{equation}
    S_{k,k'} := \{(x,d) :\exists f,\!f' \,  \rank(d|f\!(x)) \!=\! k \wedge \rank(d|f'\!(x)) \!=\! k'\}
\end{equation}
as the sets of $(x,d)$ pairs that receive ``treatments" $k$ or $k'$ under different rankers. 
Specifically, a context-document pair $(x,d)$ is included in $S_{k,k'}$, if for the context $x$ some ranker $f \in F$ puts the document $d$ at position $k$ and another ranker $f' \in F$ puts it at position $k'$. This is akin to a virtual swap intervention at positions $k$ and $k'$, albeit only with a single document. 
Based on these definitions, the toy example in Figure~\ref{fig:ab_toy} produces interventional sets such that $(x,d_1) \in S_{1,2}, S_{1,3}, S_{2,3}$, $(x,d_2) \in S_{1,2}$, $(x,d_4) \in S_{4,5}$, etc.
Note that the set includes all possible queries that may be sampled, not only those that are actually sampled in one or more rankers' logs. 
Furthermore, note that the feedback signals of $(x,d)$ from some rankers might remain counterfactual and unobserved.
Illustrating this using the toy example in Figure~\ref{fig:ab_toy}, each ranking of $F = \{f_1, f_2, f_3\}$ was a potential choice, but only one of those rankings was presented to the user -- say the ranking of $f_1$. In this way, we only observe the feedback for $d_1$ at position 1, but not at the other positions.

To account for the fact that not all interventions within an interventional set $S_{k,k'}$ have the same probability, we define the following weighting function that is proportional to the treatment-assignment probability. It can either be computed from the known assignment probabilities in an A/B test, or for consecutive policy deployments via
\begin{equation}
    q_k(x,d) := \frac{\sum_{i=1}^m n_i\mathbbm{1}[\rank(d|f_{i}(x)) = k]}{\sum_{i=1}^m n_i}.
\end{equation}
For the example in Figure~\ref{fig:ab_toy}, we have  $q_1(x,d_1)=q_2(x,d_1)=q_3(x,d_1)=\frac{1}{3}$, $q_4(x,d_4) = \frac{1}{3}$, $q_5(x,d_4) = \frac{2}{3}$, etc.

\subsection{AllPairs Estimator for the CPBM}

Now that we have extracted intervention data and its assignment mechanism from existing logs, we can tackle the question of defining an estimator for the CPBM using this data. The key challenge compared to analogous estimators for the PBM \cite{Agarwal/etal/19} lies in modeling the dependence on context. We start by constructing the following feedback labels for each $(x^j, d^j, k^j, c^j) \in \mathcal{L}$ by correcting the non-uniform assignment mechanism to the uniform intervention distribution in each interventional set \cite{Agarwal/etal/19}.
\begin{align*} 
    \hat{c}^j_{k,k'}(k) &:= \mathbbm{1}_{[(x^j,d^j) \in  S_{k,k'}]}\mathbbm{1}_{[k^j=k]}\frac{c^j}{q_k(x^j,d^j)}\\
    \hat{\neg c}^j_{k,k'}(k)&:= \mathbbm{1}_{[(x^j,d^j) \in S_{k,k'}]}\mathbbm{1}_{[k^j=k]}\frac{1-c^j}{q_k(x^j,d^j)}
\end{align*}
This can be thought of as an IPS weighted class label. 
For the PBM without a dependence on context $x$, in expectation (over the choice of ranker and query) $\hat{c}^j_{k,k'}(k)$ is proportional to the product of examination propensity $p_k$ and average relevance $r_{k,k'}\!:=\!\Pr(\rel(x,d)=1|(x,d)\!\in\!S_{k,k'})$ \citep{Agarwal/etal/19}, just like for the explicit swap interventions mentioned above. 
However, when there is a dependency on context $x$ for both the examination probabilities $p_k(x)$ and the average relevance $r_{k,k'}(x)\!:=\!\Pr(\rel(x,d)=1|(x,d) \in S_{k,k'},x)$, unbiasedness w.r.t.\ the query distribution no longer holds and there is generally no small number of individual parameters $p_k$ and $r_{k,k'}$ that could be estimated exhaustively. 
To overcome this problem, we exploit that unbiasedness still holds for each individual context, and we introduce a context-dependent examination model $h(k,x)$ for $p_k(x)$ and a context-dependent average relevance model $g(k,k',x)$ for $r_{k,k'}(x)$ to compactly capture the variation across contexts. In the experiments in this paper, we model both $h(k,x)$ and $g(k,k',x)$ as neural networks.

With these definitions in place, we can now formulate an extremum estimator similar to a maximum-likelihood criterion. We call this the \algname\ estimator for the CPBM. It combines the flexibility of the observational generative modeling approach with the robustness of the interventional methods, specifically the intervention harvesting approach previously used for estimating the PBM \citep{Agarwal/etal/19}.
\begin{align}
\hat{h}_{CPBM}:= \argmax_{h,g} \!\!\!\sum_{j\in\mathcal{L}}\sum_{k\neq k'}\!\bigg[&\hat{c}_{k,k'}^j\!(k)\log\left(h(k,x^j)g(k,k'\!,x^j)\right) \label{eq:obj}\\[-0.3cm]
& \!+ \! \hat{\neg c}_{k,k'}^j\!(k)\log\left(1\!-\!h(k,x^j)g(k,k'\!,x^j)\right)\!\!\bigg] \nonumber
\end{align}
Here, $h(k,x)$ and $g(k,k',x)$ are constrained to $(0,1)$ by using a sigmoid output layer on both networks. While the \algname\ estimator has syntactic similarity with the generative maximum-likelihood objective from \cite{wang2018position}, both are fundamentally different. Notably, \algname\ uses interventional data to control for the unobserved document relevance, while the generative model is purely observational. This allows the average relevance model $g(k,k',x)$ in \algname\ to be substantially simpler than the individual relevance model $g(q,d)$ in generative modeling. In particular, $g(k,k',x)$ in \algname\ does not model the relevance of an individual document to a query, but merely how the average relevance of documents in positions $k$ and $k'$ changes with context. As such, $g(k,k',x)$ does not require document-level relevance features, but merely takes the context $x$ and the positions as input. In the experiments, we find that the average relevance at a specific position $k$ does not change much with context $x$, and that even replacing the neural relevance model $g(k,k',x)$ with $k_{max}$ choose $2$ context-independent parameters $r_{k,k'}$ performs quite well.

We now further justify the use of the objective in Equation~(\ref{eq:obj}) by showing that it is equivalent to a weighted version of Cross-Entropy Maximization where the weights adjust for the varying amounts of interventional data available across position pairs $k,k'$. This relates the \algname\ objective to optimizing the KL-divergence between model and data, and it implies two practical advantages. First, for this type of objective, it is well known that training neural networks via backpropagation is effective.
Second, this objective provides an attractive method for information aggregation, mitigating the noisiness and sparsity of click data.

\begin{prop}
Under the condition in (\ref{eq:cond}) and i.i.d. contexts $x \sim \Pr(\mathcal{X})$, the objective in Equation~(\ref{eq:obj}) is equivalent to the following weighted form of Cross-Entropy,
\begin{align*}
\sum_{x\in \mathcal{X}}\sum_{k\neq k'}\hat{N}_{k,k'}(x) \bigg[&\hat{y}_{k,k'}(k,x) \log \left(y_{k,k'}(k,x)\right)\\
& \!+ \!\hat{\neg y}_{k,k'}(k,x) \log \left(1\!-\!y_{k,k'}(k,x)\right)\bigg]
\end{align*}
of the random variables $y_{k,k'}(k,x)$ and their empirical counterparts $\hat{y}_{k,k'}(k,x)$, $\hat{\neg y}_{k,k'}(k,x)$ weighted with $\hat{N}_{k,k'}(x)$, where 
\begin{align*}
 & y_{k,k'}(k,x) := h(k,x)g(k,k',x) = p_k(x) r_{k,k'}(x) \\
 & \hat{N}_{k,k'}(x) :=\!\!\sum_{j\in\mathcal{L}} \!\!\mathbbm{1}_{[x^{j}=x]} \mathbbm{1}_{[(x^j,d^j) \in  S_{k,k'}]} \\
 & \hat{y}_{k,k'}(k,x) :=\!\!\frac{\sum_{j\in\mathcal{L}} \!\!\mathbbm{1}_{[x^{j}=x]}\hat{c}_{k,k'}^j\!\!(k)}{\hat{N}_{k,k'}(x)} \\ 
 & \hat{\neg y}_{k,k'}(k,x) :=\!\!\frac{\sum_{j\in\mathcal{L}} \!\!\mathbbm{1}_{[x^{j}=x]}\hat{\neg c}_{k,k'}^j\!\!(k)}{\hat{N}_{k,k'}(x)},
\end{align*}
and 
\begin{align*}
&\mathbbm{E}[\hat{y}_{k,k'}(k,x)]=y_{k,k'}(k,x)\\
&\mathbbm{E}[\hat{\neg y}_{k,k'}(k,x)]=1-y_{k,k'}(k,x).
\end{align*}
\end{prop}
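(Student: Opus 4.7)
The proposition contains two essentially separate claims, which I would prove in order: first, an algebraic identity rewriting the \algname\ objective over $\mathcal{L}$ as a context-indexed weighted cross-entropy; second, the two unbiasedness identities $\mathbb{E}[\hat{y}_{k,k'}(k,x)] = y_{k,k'}(k,x)$ and $\mathbb{E}[\hat{\neg y}_{k,k'}(k,x)] = 1 - y_{k,k'}(k,x)$. The algebraic rewriting is pure bookkeeping; the unbiasedness is where condition~(\ref{eq:cond}) and the IPS weighting by $q_k$ do the real work.

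For the equivalence, I would reindex the outer sum by grouping log entries according to their context, $\sum_{j \in \mathcal{L}} = \sum_{x \in \mathcal{X}} \sum_{j : x^j = x}$. Since $h(k, x^j)$ and $g(k, k', x^j)$ depend only on the context, the factor $\log y_{k,k'}(k,x)$ can be pulled out of the inner sum, leaving a coefficient $\sum_{j : x^j = x} \hat{c}^j_{k,k'}(k)$ in front of $\log y_{k,k'}(k,x)$. By the definitions of $\hat{N}_{k,k'}(x)$ and $\hat{y}_{k,k'}(k,x)$ in the proposition statement, this inner sum equals exactly $\hat{N}_{k,k'}(x)\,\hat{y}_{k,k'}(k,x)$. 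The symmetric argument handles the $\hat{\neg c}^j$ term with $\hat{\neg y}_{k,k'}(k,x)$, assembling the claimed weighted cross-entropy form.

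For the unbiasedness, I would argue per log entry under condition~(\ref{eq:cond}). Fix context $x$ and consider a single tuple with $x^j = x$ and $d^j = d$. Because (\ref{eq:cond}) makes the ranker choice independent of $x$, we get $\Pr(k^j = k \mid x^j = x,\, d^j = d) = q_k(x, d)$ directly from the definition of $q_k$, and the CPBM click model gives $\mathbb{E}[c^j \mid k^j = k,\, x,\, d] = p_k(x)\,\rel(x, d)$. Hence the IPS correction cancels the treatment-assignment factor:
\[\mathbb{E}\bigl[\mathbbm{1}_{[k^j = k]}\, c^j / q_k(x, d) \,\big|\, x^j = x,\, d^j = d\bigr] = p_k(x)\,\rel(x, d).\]
Multiplying by $\mathbbm{1}_{[(x,d)\in S_{k,k'}]}$ and then averaging $d^j$ conditional on $x^j = x$ and $(x, d^j) \in S_{k,k'}$ turns $\rel(x, d^j)$ into $r_{k,k'}(x)$ by definition, producing $p_k(x)\, r_{k,k'}(x) = y_{k,k'}(k,x)$. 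The parallel calculation with $1 - c^j$ and $\mathbb{E}[1 - c^j \mid k^j = k, x, d] = 1 - p_k(x)\,\rel(x, d)$ yields the companion identity $1 - y_{k,k'}(k,x)$.

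The main technical subtlety I expect is that $\hat{y}_{k,k'}(k,x)$ is a ratio of two correlated random quantities, so I cannot invoke linearity of expectation naively across the normalization by $\hat{N}_{k,k'}(x)$. I would resolve this by first conditioning on the multiset $\{d^j : x^j = x,\, (x, d^j) \in S_{k,k'}\}$, which freezes both $\hat{N}_{k,k'}(x)$ and the realized relevances $\rel(x, d^j)$; under this conditioning the remaining randomness lies only in $(k^j, c^j)$ and the per-entry IPS identity above sums cleanly, after which the outer expectation over the sampled documents invokes the defining property of $r_{k,k'}(x)$ as a conditional relevance probability on $S_{k,k'}$. Making this conditioning explicit is the one step I would write out with care, since it is where the statement in the proposition needs to be interpreted precisely.
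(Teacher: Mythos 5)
Your proposal is correct and follows essentially the same two-step structure as the paper's proof: the same regrouping of the sum over $\mathcal{L}$ by context to obtain the weighted cross-entropy form, and the same IPS-cancellation computation (using condition~(\ref{eq:cond}) to identify the treatment-assignment probability with $q_k(x,d)$) for the unbiasedness claims. The only difference is in how the normalization by $\hat{N}_{k,k'}(x)$ is handled: the paper computes $\mathbbm{E}[\hat{y}_{k,k'}(k,x)\hat{N}_{k,k'}(x)]$ and divides by $\mathbbm{E}[\hat{N}_{k,k'}(x)]$ under an explicitly stated independence assumption between the two, whereas you condition on the realized documents to freeze $\hat{N}_{k,k'}(x)$ and the relevances --- a minor variation on the same idea that, if anything, treats the ratio of correlated quantities slightly more carefully.
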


\begin{proof}
First, we rewrite the objective as follows:
\begin{align*}
\sum_{j\in\mathcal{L}}&\sum_{k\neq k'} \!\bigg[\hat{c}_{k,k'}^j(k)\log \left(y_{k,k'}(k,x^j)\right)
+ \hat{\neg c}_{k,k'}^j\!(k)\log \left(1\!-\!y_{k,k'}(k,x^j)\right)\!\bigg]\\
=& \sum_{x\in \mathcal{X}} \sum_{j\in\mathcal{L}}\sum_{k\neq k'} \mathbbm{1}_{[x^{j}=x]} \!\bigg[\hat{c}_{k,k'}^j(k)\log\left(y_{k,k'}(k,x^j)\right)\\
& +\hat{\neg c}_{k,k'}^j(k)\log\left(1\!-\!y_{k,k'}(k,x^j)\right)\bigg]\\
\end{align*}
\begin{align*}
=& \sum_{x\in \mathcal{X}}\sum_{k\neq k'}\!\bigg[\left(\sum_{j\in\mathcal{L}} \mathbbm{1}_{[x^{j}=x]}\hat{c}_{k,k'}^j(k)\right)\log\left(y_{k,k'}(k,x^j)\right)\\
& + \left( \sum_{j\in\mathcal{L}} \mathbbm{1}_{[x^{j}=x]}\hat{\neg c}_{k,k'}^j(k)\right)\log \left(1\!-\!y_{k,k'}(k,x^j)\right)\bigg]\\
=& \sum_{x\in \mathcal{X}}\!\sum_{k\neq k'}\! \hat{N}_{k,k'}(x)\bigg[ \hat{y}_{k,k'}(k,x)\log\left(y_{k,k'}(k,x^j)\right)\\
&+ \hat{\neg y}_{k,k'}(k,x)\log\left(1\!-\!y_{k,k'}(k,x^j)\right)\bigg]
\end{align*}
Next, we are going to prove that $\mathbbm{E}[\hat{y}_{k,k'}(k,x)]=y_{k,k'}(k,x)$ and $\mathbbm{E}[\hat{\neg y}_{k,k'}(k,x)]=1-y_{k,k'}(k,x)$, which is required by Cross-Entropy. 
\begin{align*}
&\mathbbm{E}[\hat{y}_{k,k'}(k,x)\hat{N}_{k,k'}(x)]\\
=&\mathbbm{E}[\sum_{j\in\mathcal{L}} \mathbbm{1}_{[x^{j}=x]}\mathbbm{1}_{[(x^j,d^j) \in  S_{k,k'}]}\mathbbm{1}_{[k^j=k]}\frac{c^j}{q_{k}(x^j,d^j)}]\\
=&\sum_{i=1}^m \sum_{j=1}^{n_i}\!\Pr(x)\!\!\!\!\!\!\sum_{d \in \Omega(x)}\!\!\!\!\mathbbm{1}_{[(x,d) \in  S_{k,k'}]}\mathbbm{1}_{[\mathrm{rk}(d|f_i(x))=k]}\frac{\mathbbm{E}[c(d)]}{q_{k}(x,d)} \\
=&\Pr(x)\!\!\sum_{i=1}^m \sum_{j=1}^{n_i}  \!\sum_{d \in \Omega(x)}\!\!\!\!\mathbbm{1}_{[(x,d) \in  S_{k,k'}]}\mathbbm{1}_{[\mathrm{rk}(d|f_i(x))=k]}\frac{p_k(x)\rel(x,d)}{q_k(x,d)} \\
=&p_k(x)\Pr(x)\!\!\!\!\!\!\sum_{d \in \Omega(x)}\!\!\!\!\mathbbm{1}_{[(x,d) \in  S_{k,k'}]}\rel(x,d)\frac{\sum_{i=1}^m \!\!\sum_{j=1}^{n_i}\!\!\mathbbm{1}_{[\mathrm{rk}(d|f_i(x))=k]}}{q_k(x,d)}\\
=&p_k(x)\Pr(x)\!\!\!\!\sum_{d \in \Omega(x)}\!\!\!\!\mathbbm{1}_{[(x,d) \in  S_{k,k'}]}\rel(x,d)\frac{\sum_{i=1}^m \!\!n_i\!\mathbbm{1}_{[\mathrm{rk}(d|f_i(x))=k]}}{q_k(x,d)}\\
=&p_k(x)\Pr(x)\!\!\!\!\sum_{d \in \Omega(x)}\!\!\!\!\mathbbm{1}_{[(x,d) \in  S_{k,k'}]}\rel(x,d) \sum_{i=1}^{m}n_i\\
=&p_k(x)\mathbbm{E}[\sum_{j\in\mathcal{L}} \mathbbm{1}_{[x^{j}=x]}\mathbbm{1}_{[(x^j,d^j) \in  S_{k,k'}]}\rel(x^j,d^j)]\\
=&p_k(x) r_{k,k'}(x) N_{k,k'}(x) \\
=&y_{k,k'}(k,x)N_{k,k'}(x) 
\end{align*}
where $N_{k,k'}(x) = \mathbbm{E}[\sum_{j\in\mathcal{L}} \mathbbm{1}_{[x^{j}=x]}\mathbbm{1}_{[(x^j,d^j) \in  S_{k,k'}]}] = \mathbbm{E}[\hat{N}_{k,k'}(x)]$.
Then we have $\mathbbm{E}[\hat{y}_{k,k'}(k,x)] = \frac{\mathbbm{E}[\hat{y}_{k,k'}(k,x)\hat{N}_{k,k'}(x)]}{\mathbbm{E}[\hat{N}_{k,k'}(x)]} = y_{k,k'}(k,x)$.
Similarly, $\mathbbm{E}[\hat{\neg y}_{k,k'}(k,x)] = 1 - y_{k,k'}(k,x)$. 
Note that we make the reasonable assumption that user click behavior is independent of the context sampling and ranker choice process, and thus $\hat{y}_{k,k'}(k,x)$ and $\hat{N}_{k,k'}(x)$ are independent random variables, so that $\mathbbm{E}[\hat{y}_{k,k'}(k,x)\hat{N}_{k,k'}(x)] = \mathbbm{E}[\hat{y}_{k,k'}(k,x)]\mathbbm{E}[\hat{N}_{k,k'}(x)]$.

\end{proof}

\begin{figure}[tb]
  \centering
  \begin{subfigure}[tb]{\columnwidth}
    \centering
    \includegraphics[height=3.5cm]{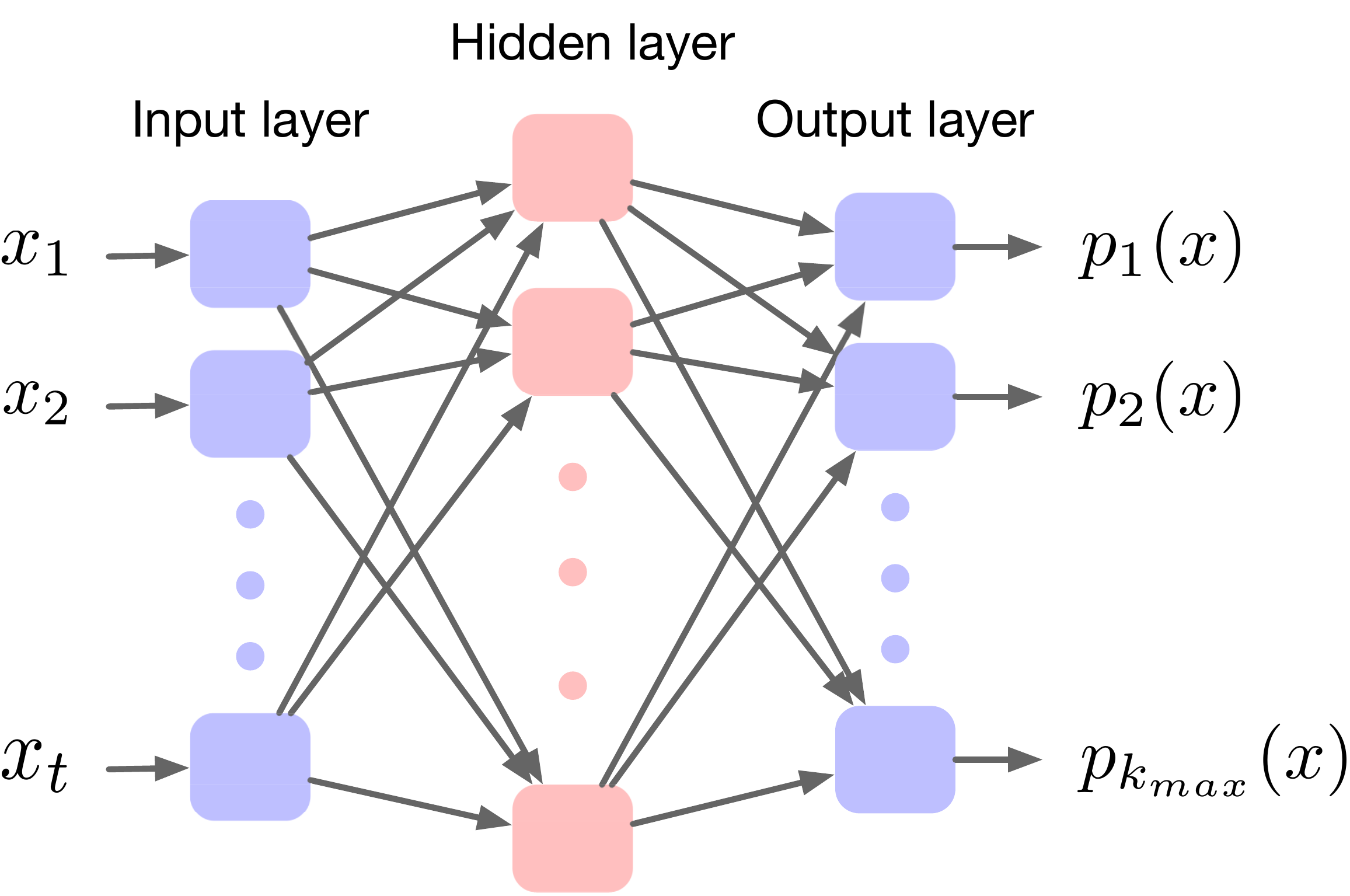}
    \caption{Propensity Model.}
    \label{fig:prop_mlp}
  \end{subfigure}\\
  \begin{subfigure}[tb]{\columnwidth}
    \centering
    \includegraphics[height=3.5cm]{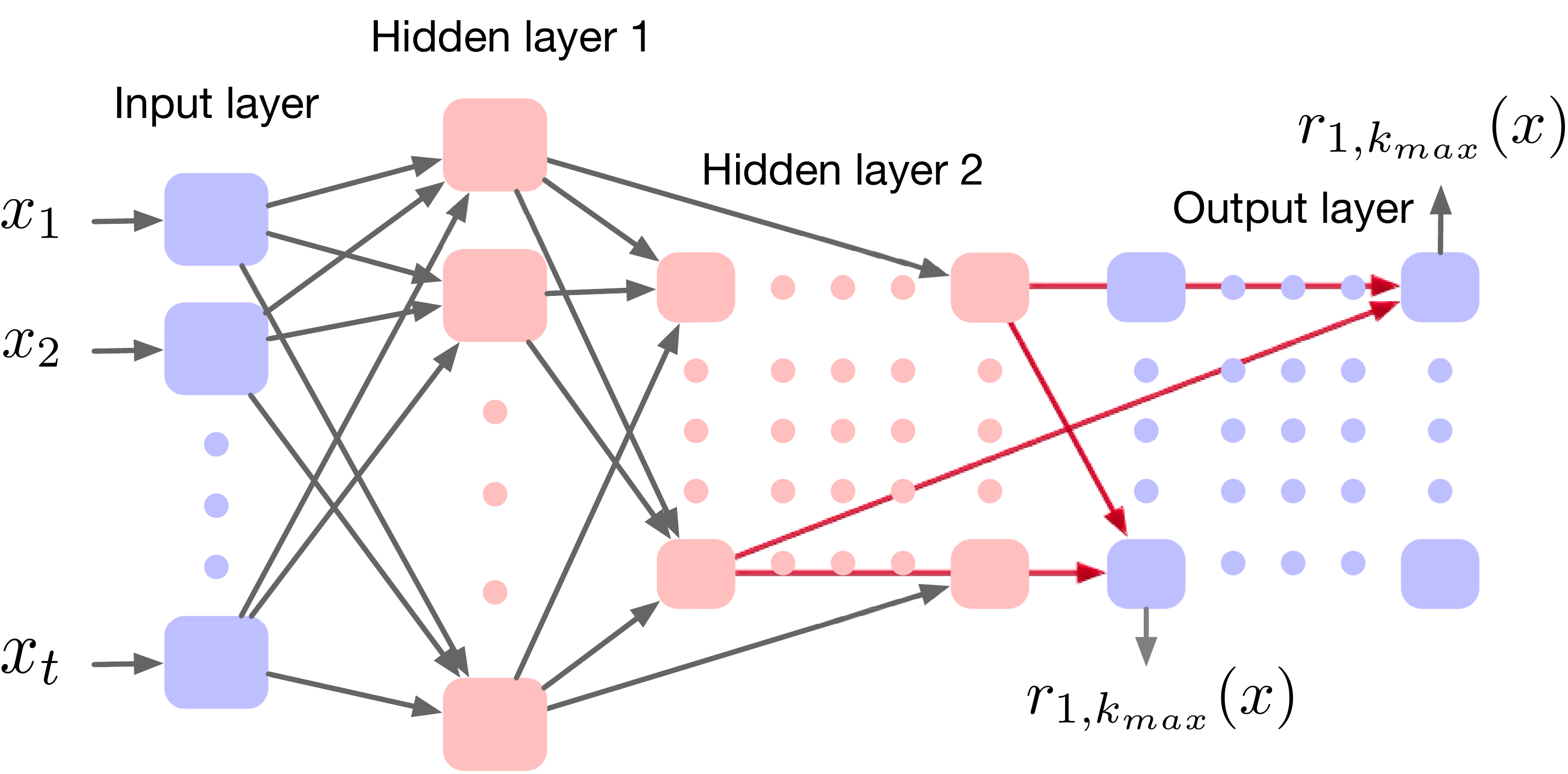}
    \caption{Relevance Model.}
    \label{fig:rel_mlp}
  \end{subfigure}
    \caption{The architecture of the multilayer perceptrons.}
  \label{fig:mlp}
\end{figure}

\subsection{Neural Network Model for the CPBM} \label{sec:mlp}

We employ neural networks for modeling both the context-depen\-dent propensities $h(k,x)$ as well as the context-dependent average relevance $g(k,k',x)$. The respective multi-layer perceptron (MLP) architectures are shown in Figure~\ref{fig:mlp}. Both networks take as input the context features $x \in \mathbb{R}^t$, and output the examination propensity vector $p(x) \in \mathbb{R}^{k_{max}} $ and the average relevance matrix $r(x) \in \mathbb{R}^{k_{max} \times k_{max}}$ respectively. 
The hidden layer in the propensity MLP is a traditional sigmoid-activated dense layer, which learns a weight matrix $\mathrm{W_p} \in \mathbb{R}^{t \times k_{max}}$ and bias vector $\mathrm{b_p} \in \mathbb{R}^{k_{max}}$ to produce a propensity vector $p(x) = \sigma(\mathrm{W_p}x+\mathrm{b_p})$. The average-relevance model is less standard, and its first hidden layer in the relevance MLP learns a 3d weight array $\mathrm{W_r} \in \mathbb{R}^{t \times k_{max} \times k_{max}}$ and bias matrix $\mathrm{b_r} \in \mathbb{R}^{k_{max} \times k_{max}}$ to produce an initial relevance matrix $ \widetilde{r}(x) = \sigma(\mathrm{W_r}x+\mathrm{b_r})$. 
To ensure the symmetry of the relevance matrix, the second hidden layer of the relevance MLP computes $r(x) = (\widetilde{r}(x)^T + \widetilde{r}(x))/2$.

We conjecture that improvements to these models could further improve results.
First, other neural networks may be good alternatives. 
For instance, in terms of the sequential examination process, we could iteratively output the propensities $p_1(x), p_2(x), ..., p_{max}(x)$ using a recurrent neural network (RNN), where the input sequence consists of repeated context features $x$.
Second, embedding the position $k$ as a feature would give rise to different network architectures.
For example, the position could be encoded in a one-hot feature vector $\mathbf{k} \in \mathbb{R}^{k_{max}}$, which could then be concatenated to the context features $x$ to predict the examination propensity $p_k(x)$.

\section{Empirical Evaluation} \label{sec:exp}

We empirically evaluate the effectiveness and robustness of our method through real-world experiments on the ArXiv Full-Text Search \footnote{http://search.arxiv.org:8081/} and through semi-synthetic experiments on the Yahoo Learning-To-Rank Challenge corpus (set 1) \citep{chapelle2011yahoo}. The ArXiv experiments verify real-world relevance and applicability, while the synthetic experiments enable the evaluation of the method over a wide range of scenarios.

\subsection{Real-World Evaluation: ArXiv Search}

To verify that contextual effects on the propensity exist in real-world settings and to show that these can be estimated using intervention harvesting and the \algname\ estimator for the CPBM, we conducted a series of experiments on the ArXiv Full-Text Search.
To get reliable propensity estimates that can serve as a gold-standard, we fielded explicit swap intervention in addition to an A/B test that we use for intervention harvesting. 
Specifically, we assigned equal probability of accepting an incoming query to these two mechanisms.
For intervention harvesting, we used three ranking functions $\{f_1,f_2,f_3\}$ and chose uniformly at random between them for half of the incoming queries. For the other half, we also chose one of these ranking functions at random but inserted an explicit swap intervention between rank $1$ and rank $k \in \{1, 2, ..., 21\}$. These explicit swap interventions were then used to get a gold-standard estimate of the propensities via the methods in \cite{Joachims/etal/17a}. To avoid any confounding due to changes in the query distribution, data for all conditions was collected in parallel between May 14, 2018 and December 13, 2018. 
In total, 138,600 queries and 112,000 clicks were collected, with about 61,100 queries for the explicit intervention and the rest for the intervention harvesting. 
For the following experiments, the data was randomly divided into a training set with 80 \% of the data, a validation set with 10 \%, and a test set with the remaining 10 \%. In all experiments, the hyper-parameters of the neural networks in the CPBM were selected via cross-validation.

\begin{table}
\centering
\caption{Size of data sets from ArXiv.}\label{tab:size}
\vspace*{-0.3cm}
\begin{tabular}{lcccc}
\toprule
Type  & \multicolumn{2}{c}{Swap intervention} & \multicolumn{2}{c}{A/B Test Harvesting}\\
\cmidrule(lr){2-3} \cmidrule(lr){4-5}
& Clicks & Queries & Clicks & Queries\\
\midrule
Complex & 32,108 & 24,460 & 41,638 & 27,072 \\
Simple & 15,296 & 36,659 & 22,915 & 50,443 \\
\bottomrule
\end{tabular}
\vspace*{-0.1cm}
\end{table}

\paragraph{\bf Do real-world propensity curves actually depend on context?}

We first verify that the propensity curves in ArXiv do indeed depend on context. To this effect, we introduce a single binary context feature that characterizes each query as either complex (denoted by 1) or simple (denoted by 0).

Complex queries are those that contain some logical operators from the Boolean query language supported by the search engine, such as ``OR" and ``AND'', while simple queries are the remainder. The numbers of queries and clicks are given in Table~\ref{tab:size}. We then use the gold-standard propensity estimator from \cite{Joachims/etal/17a} to learn two PBM models from the swap intervention data, one for complex and one for simple queries. 

Figure~\ref{fig:complex} shows that the two propensity curves are indeed substantially different. The shaded region for each curve depicts a 95 \% confidence interval run on 1000 bootstrap samples. One possible interpretation is that complex queries are often used as more of a ``lookup" rather than a search, and thus the first few results typically either match or the user reformulates. On the other hand, simple queries are often part of an exploratory search, such that users go further down the ranking.

\begin{figure}[t]
  \centering
    \vspace*{-0.3cm}\includegraphics[width=0.80\columnwidth]{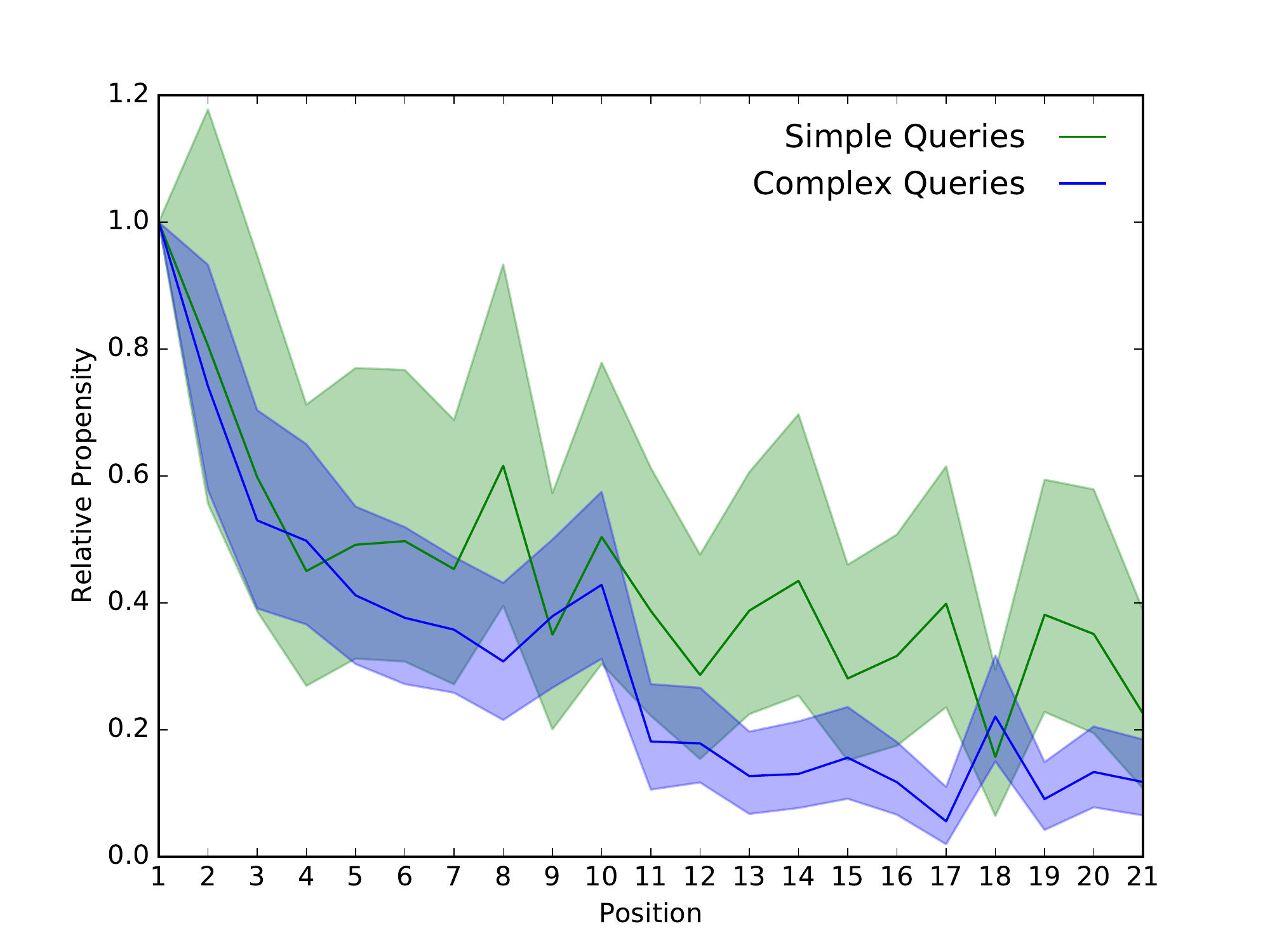}
    \vspace*{-0.4cm}
    \caption{Propensity curves for simple and complex queries on ArXiv estimated as two PBM via swap interventions.}
  \label{fig:complex}
  \vspace*{-0.2cm}
\end{figure}

\begin{figure}[t]
  \centering
    \vspace*{-0.3cm}\includegraphics[width=0.80\columnwidth]{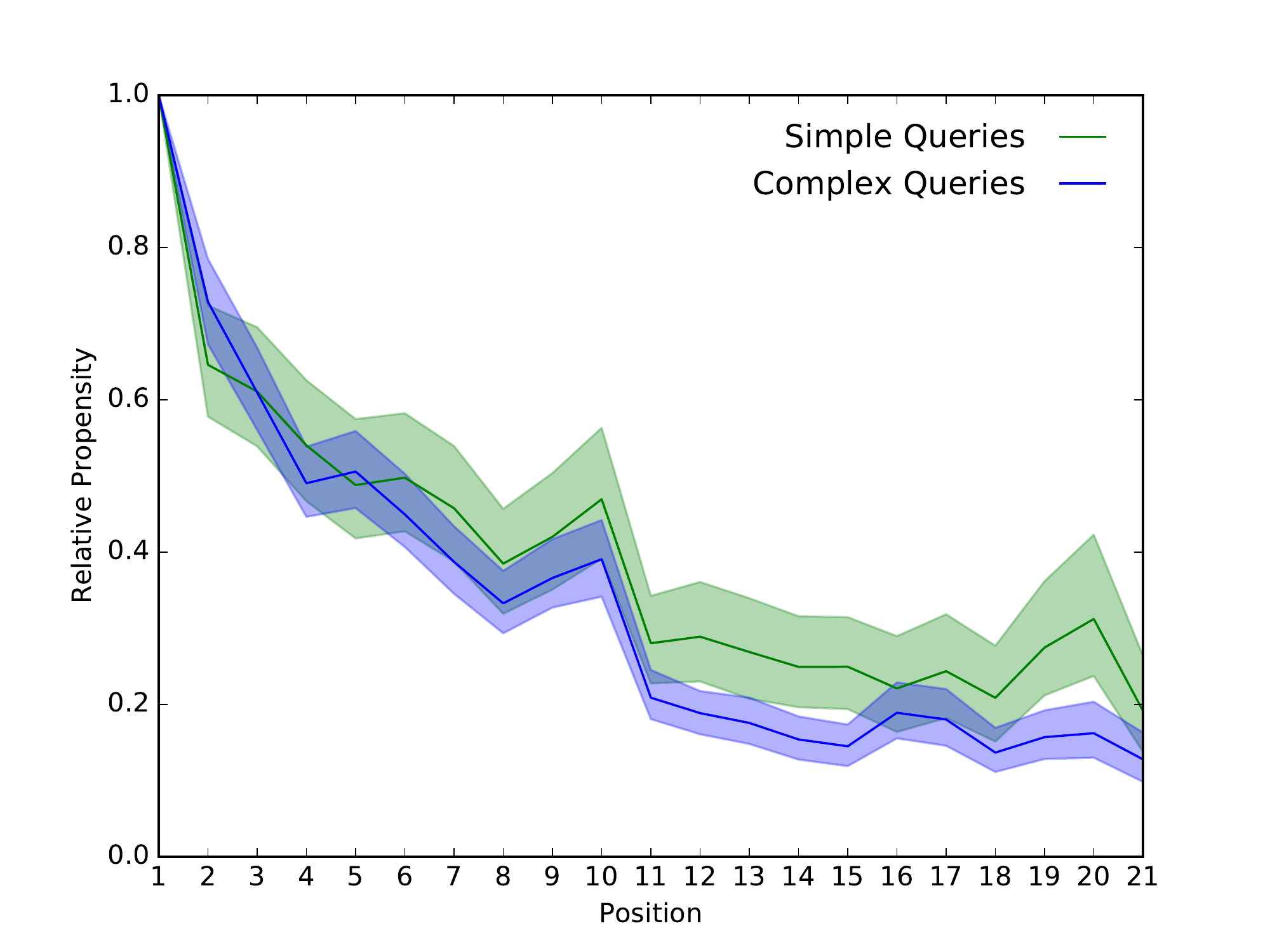}
    \vspace*{-0.4cm}
    \caption{Propensity curves for simple and complex queries on ArXiv estimated as a CPBM via intervention harvesting.}
  \label{fig:complexcpbm}
  \vspace*{-0.2cm}
\end{figure}

\paragraph{\bf Can \algname\ learn context-dependent propensity curves?}

Now that we know that contextual dependencies exist in real-world propensity curves, we can verify whether the \algname\ estimator with the neural CPBM model can accurately estimate these curves. Figure~\ref{fig:complexcpbm} show the propensity curves estimated by the \algname\ estimator on the intervention harvesting data from Table~\ref{tab:size} using the neural model with only the single input feature. The curves closely match the gold standard in Figure~\ref{fig:complex}, indicating that the CPBM can accurately learn these curves with a single neural network model. In addition, \algname\ achieves much improved error bars. This is to be expected, given that \algname\ makes more efficient use of the data than the ratio-estimates from \cite{Joachims/etal/17a}.

\begin{table*}
\caption{Objective on the test set for the PBM and the CPBM when including each feature group and for all features.}
\label{tab:max}
\vspace*{-0.3cm}
\centering
\begin{tabular}{lccccccc}
\toprule
Model & PBM & \multicolumn{6}{c}{CPBM}\\
\cmidrule(lr){3-8}
     &    & category  & query\_len & ord\_in\_session & \#results & result\_dist & All     \\
\midrule
Objective & -13926.18 & -12622.96 & -12674.8 & -13205.21 & -13241.28  & -12901.94 & -12306.52 \\
Increment (vs. PBM) & - & 1303.22 & 1251.38 & 720.97 & 684.90 & 1024.24 & 1619.66 \\
\bottomrule
\end{tabular}
\end{table*}

\paragraph{\bf Can \algname\ learn CPBM models with many context features?}

While it is infeasible to introduce additional features and learn separate PBM for each combination, adding context features to our neural CPBM model is straightforward. We will now explore in how far different groups of context features improve the predictive accuracy of the CPBM. Since we no longer have a gold-standard propensity curve to compare against, we instead use the \algname\ objective evaluated on a test set as our measure of predictive performance -- similar to evaluating log-likelihood on a test set.
We explore the following groups of context features:
\begin{enumerate}
    \item category: whether the query is specified as a category and its corresponding specified category ((binary $\{0, 1\}$, 10 features in total)
    \item query\_len: whether the length of the query is greater than $X \in \{1, 2, 5, 10, 15, 20, 25, 30, 35, 40\}$ (binary $\{0, 1\}$, 10 features in total)
    \item ord\_in\_session: whether the order of the query in its session is greater than $X \in \{1, 2, 5, 10, 15\}$ ((binary $\{0, 1\}$, 5 features in total)
    \item \#results: whether the number of results for each query is greater than $X \in \{1, 2, 5, 10, 15, 20, 50, 100, 150, 200\}$ (binary $\{0, 1\}$, 10 features in total)
    \item result\_dist: the category distribution of each query (lies in $[0,1]^{35}$ with sum to 1, 35 features in total)
\end{enumerate}
Other reasonable features can also be taken into consideration, like query performance predictors \citep{Cronen-Townsend:2002:PQP:564376.564429, amati2004query}.

Table \ref{tab:max} shows the test-set performance. The baseline is a PBM model trained according to \cite{Agarwal/etal/19}, which is essentially a CPBM model without features and a relevance model that explicitly represents each pairwise relevance. The table shows that the CPBM improves on the PBM in terms of predictiveness across all feature groups. The "category" and "query\_len" features appear to have the largest influence on the propensity curve. However, the best predictive accuracy is achieved when all features are included in the CPBM. This verifies that the CPBM can make use of complex features to improve the fit of the propensity model.




\subsection{Robustness Analysis: Yahoo LTR Challenge}

We now turn to experiments on semi-synthetic data. Using a semi-synthetic setup combines the external validity of using a real-world dataset with the ability to fully explore a range of different settings for evaluating robustness. 

Our semi-synthetic click data is based on the Yahoo LTR Challenge dataset. It contains manual relevance assessments as ground truth and we follow the given train/validation/test splits, but filter out queries that have no relevant documents. To generate click data for intervention harvesting, we learned two ranking functions by running SVM-Rank \citep{joachims2002optimizing} on two small randomly sampled subsets of the training queries. 
To control the ranker similarity, 22 queries were the same for both rankers and each ranker independently sampled 92 additional queries. 
The remaining (roughly 11,400) queries of the training set were used to generate synthetic click data based on these two ranking functions.

To generate the click data via a CPBM, we need a model for the context features and an examination model. 
For context features, each query was mapped to a 10-dimensional feature vector $x$, concatenated by two parts: relevant part $[x_1,x_2,...,x_i]$ and random part $[x_{i+1},x_{i+2},...,x_{10}]$, and we use parameter $\zeta = \frac{i}{10}$ to control the dependency between relevance and context.
For the relevant part, the important features which contribute to the relevance modeling were selected in the following way: we first used an SVM-Rank to get a one-sweep click log on the training split. 
Then we trained $k_{max}$ logistic models $r_k(x), k \in [1, k_{max}]$, which denotes the average relevance at position $k$. 
Let the coefficient of each feature $x_j$ among the given query-document feature vector in each model $r_m$ be $u_{jm}$, we assigned each feature a score $s_j = \max_j{\abs{u_{jm}}}$. 
We randomly selected $i$ features from a candidate set which contains features $x_j$ ranked in top-30 $s_j$ list.
At last, the relevant part was the average of the vector representations of all relevant results on those selected $i$ features. 
For the random part, we drew $[x_{i+1},x_{i+2},...,x_{10}]$ from the normal distribution $\mathcal{N}(0,\,\sigma^{2})$. 
To keep the performance of the PBM stable with increasing $\zeta$, $\sigma$ was tuned to be $0.35$.

For the examination model we chose $\Pr(E=1|k,x)=\frac{1}{k^{\max(w \cdot x + 1,0)}}$. The parameter vector $w$ was drawn from a uniform distribution over the half-open interval $[-\eta, \eta)$, and we normalized the weight to $\sum_{i=1}^{10} w_i = 0$ by subtracting the average weight from each position.

The parameter $\eta$ controls how much examination varies with context. In the extreme case of $\eta=0$, there is no context dependency, and context dependency grows as $\eta$ increases.
We also incorporated click noise into the simulation by setting the probability of clicking on an irrelevant result to $\epsilon_{-} = 0.1$. We chose the maximum number of positions to be $k_{max}=10$.

To evaluate the accuracy of the propensity estimates on a test sample $\mathcal{D} =\{ x^j|j\in[M]\}$, we adopted the following relative error measure where $\hat{p}_k(x) = \frac{h(k,x^j)}{h(1,x^j)}$ are the estimated relative propensities and $p_k(x) = \frac{\Pr(E=1|k,x^j)}{\Pr(E=1|1,x^j)}$ are the true relative propensities are known by construction.
\[
    RelError(h)=\frac{1}{M}\sum_{j\in \mathcal{D}}\frac{1}{k_{max}}\sum_{k=1}^{k_{max}}\abs{1-\frac{\hat{p}_k(x^j)}{p_k(x^j)}}
\]
This measure evaluates the accuracy of the estimates in terms of their use as inverse relative propensity weights, which will be their primary function. The relative error reported below is evaluated on the test set, and error bars indicate the standard deviation estimated over 6 independent runs (except in Figure~\ref{fig:rel} as described below).

In our implementation of the \algname\ estimator, the propensity model and the relevance model were both implemented by a multilayer perceptron (described in Section~\ref{sec:mlp}), whose parameters were selected via cross-validation.

\begin{table}[tb]
    \caption{Relative decrease in the relative error of CPBM vs. PBM ($\#\text{Training queries} = 113590$, $\eta = 0.5$).}
    \vspace*{-0.3cm}
    \begin{center}
    \begin{tabular}{lrrr}
    \toprule
    & PBM &CPBM & Improvement \\
    \midrule
    RelError & 0.478700 & 0.169443 & 64.60\%\\
    \bottomrule
    \end{tabular}
    \label{tab:pbm}
    \end{center}
    \vskip -0.0in
\end{table}

\paragraph{\bf How much more accurate is the CPBM compared to the PBM?}

Table~\ref{tab:pbm} shows the RelError of the CPBM and the PBM on test data, where both are trained using the \algname\ estimator using a large amount of click data for training ($113,590$ training queries). It can be thought of as the asymptotic performance of the respective model. The table shows that the CPBM improves substantially over the PBM, more than halving the error. This verifies that the \algname\ estimator can effectively learn context-dependent propensity curves from harvested interventions. Note that the CPBM had no knowledge of the true functional form of the examination model that was used to generate the clicks, but had to approximate it using the neural network model.

\paragraph{\bf Does the CPBM improve learning-to-rank performance?}

In practice, the propensities coming from the CPBM will typically be used for learning new ranking functions from the de-biased click data. We now evaluate whether the CPBM model improves learning performance compared to using the propensities from the PBM.
 
We trained a Clipped Propensity SVM-Rank \cite{Joachims/etal/17a} for each of the following three propensity models: PBM estimated via \algname, CPBM estimated via \algname, and -- as gold standard -- the true propensities used during synthetic data generation.
All hyper-parameters were picked via cross-validation.
For rank $r > 21$, we impute the propensity $p_r(x) = p_{21}(x)$.
Following \cite{Joachims/etal/17a}, we measure test-set ranking performance via the average sum of the ranks of the relevant results across the queries in the test set $\mathcal{D}$,
\begin{equation*}
    AvgRank(f) = \frac{1}{M}\sum_{j\in \mathcal{D}} \sum_{d \in \Omega(x^j)} \rank(d|f(x^j)) \rel(x^j, d).
\end{equation*}

\begin{figure}[tb]
\centering
\vspace*{-0.3cm}\includegraphics[width=0.75\columnwidth]{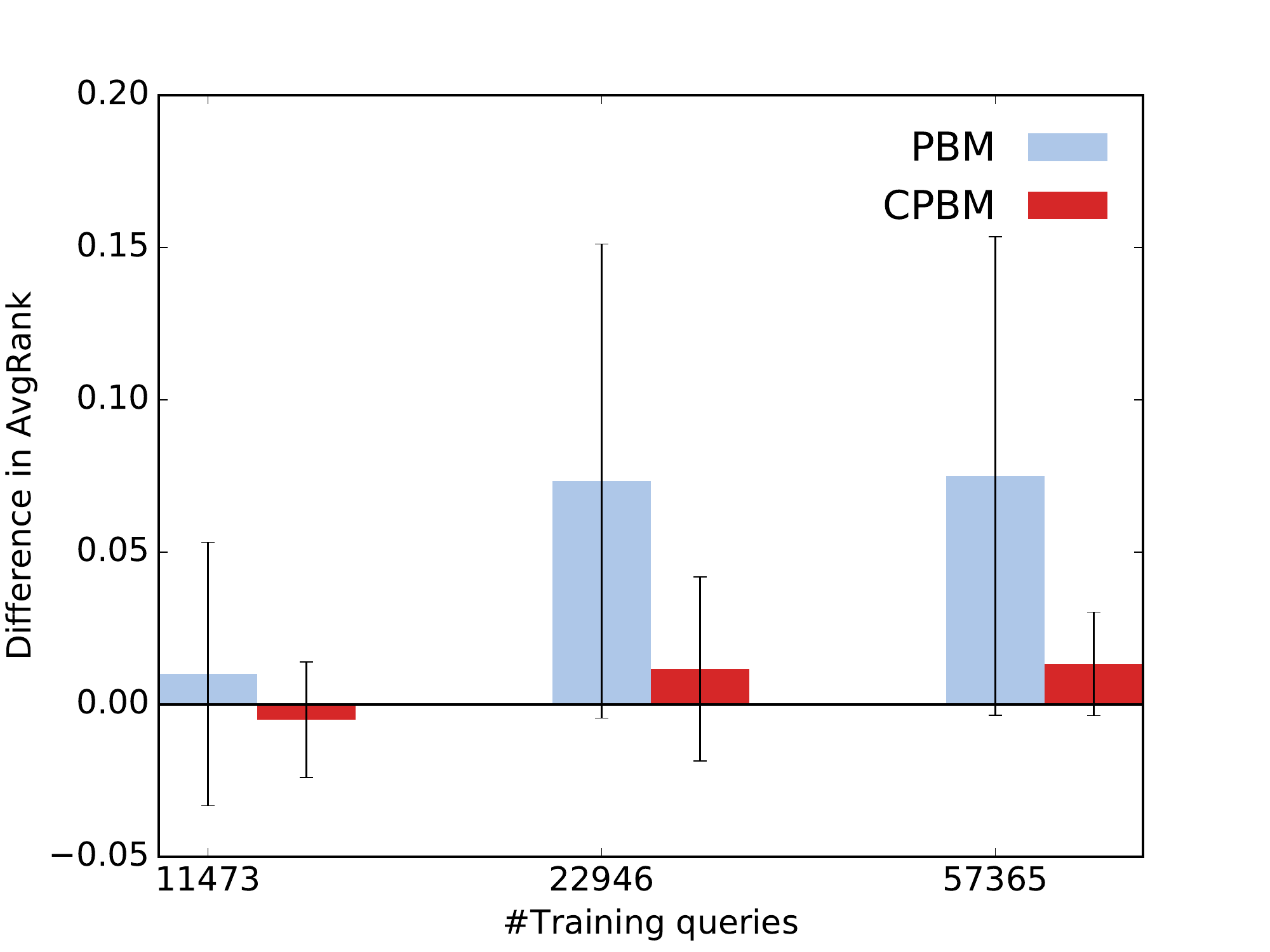}
\vspace*{-0.4cm}
\caption{Difference in AvgRank compared to the true propensity model ($\eta = 10$, $\zeta=1$).}
\label{fig:learn}
\vspace*{-0.1cm}
\end{figure}

Figure~\ref{fig:learn} shows ranking performance relative to the performance of the Propensity SVM-Rank that has access to the true propensities. For sufficiently large data set sizes, the performance when using the CPBM propensities appears closer to the gold-standard performance than when using the PBM. This is to be expected, since the training objective the Propensity SVM-Rank is known to be biased for the misspecified propensities of the PBM, so that more data no longer translates into better learning performance.

\paragraph{\bf How much data is needed to learn a \mbox{CPBM}?} 

\begin{figure}[tb]
\centering
\vspace*{-0.3cm}\includegraphics[width=0.75\columnwidth]{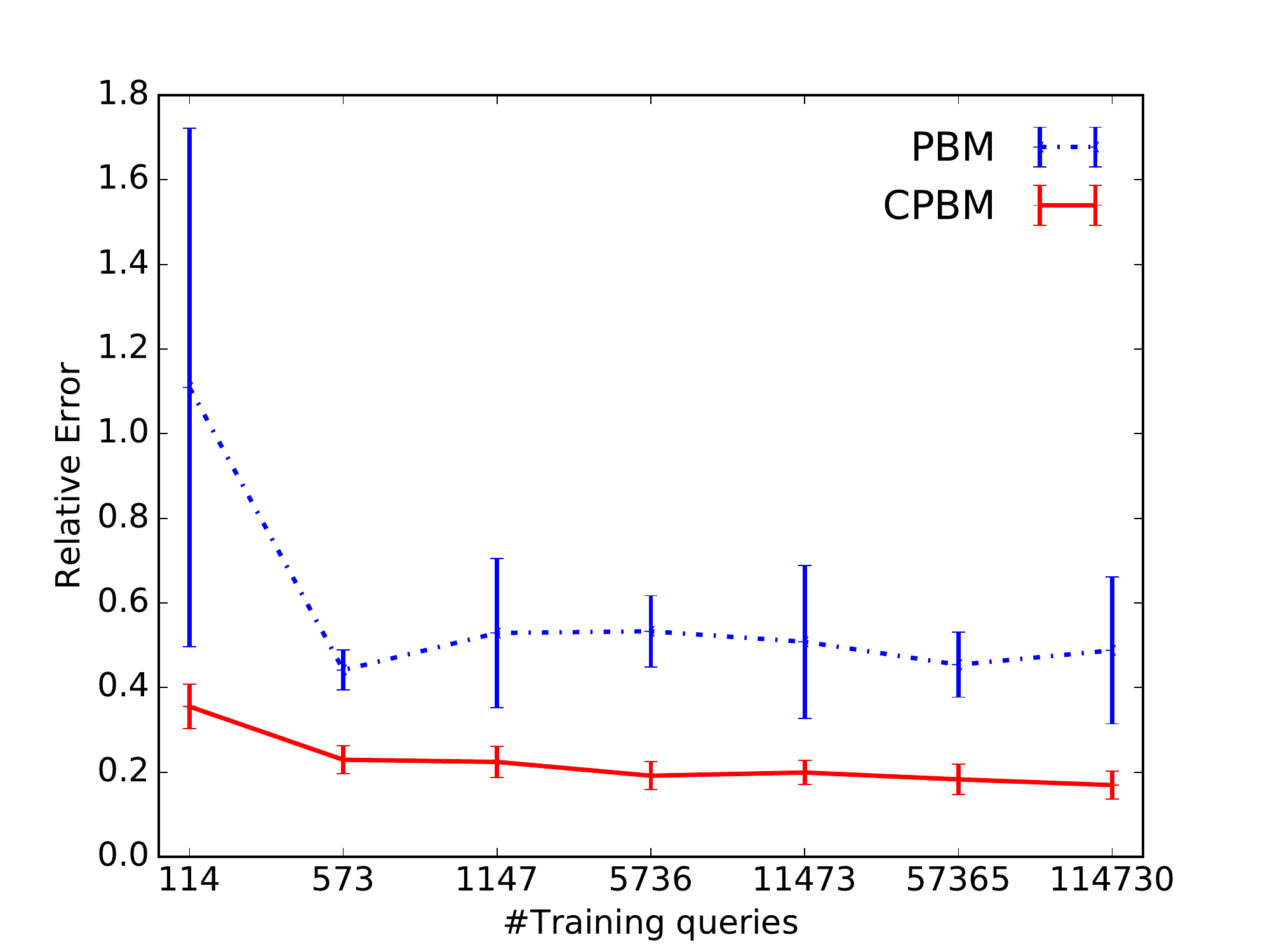}
\vspace*{-0.4cm}
\caption{Relative error with increasing number of training queries ($\eta = 0.5$, $\zeta=1$).}
\label{fig:data}
\vspace*{-0.2cm}
\end{figure}
So far, we have used large amounts of training data to study the asymptotic performance of \algname\ for the CPBM. But how much data is really needed? Figure~\ref{fig:data} compares the error of the three models across a wide range of training data sizes. The figure shows that a much smaller number of training examples suffices to get good accuracy. In particular, the relative error decreases quickly and asymptotes at about 5,700 training queries. 
Furthermore, Figure~\ref{fig:data} shows that the CPBM dominates the PBM across the whole range of data-set sizes, even when the amount of click data is quite small.

\paragraph{\bf How does the strength of context dependence affect the \mbox{CPBM}?}
\begin{figure}[tb]
\centering
\vspace*{-0.3cm}\includegraphics[width=0.75\columnwidth]{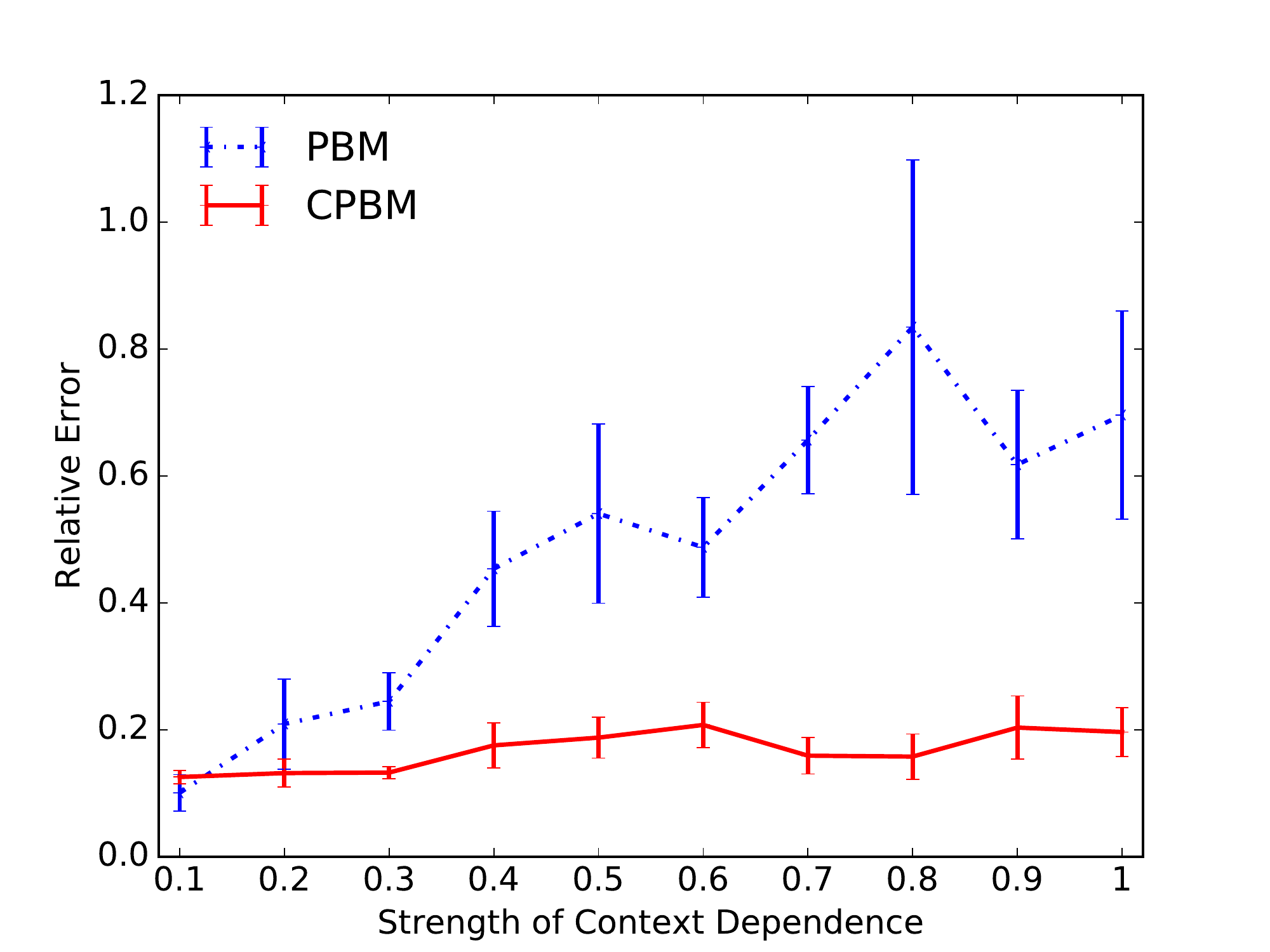}
\vspace*{-0.4cm}
\caption{Relative error with increasing strength of context dependence $\eta$ ($\#\text{Training queries} = 57365$, $\zeta=1$).}
\label{fig:ctx}
\vspace*{-0.1cm}
\end{figure}
We explore the behavior of the estimators when we vary the strength of context dependence via $\eta$. 
Results are shown in Figure~\ref{fig:ctx}, where the CPBM outperforms or at least matches the PBM across the whole range. 
As expected, the error of the PBM increases as the strength of context dependence increases. 
In contrast, the CPBM can capture the context dependence effectively.

\paragraph{\bf How important is it to incorporate a relevance model in the estimator?}
\begin{figure}[tb]
\centering
\vspace*{-0.3cm}\includegraphics[width=0.75\columnwidth]{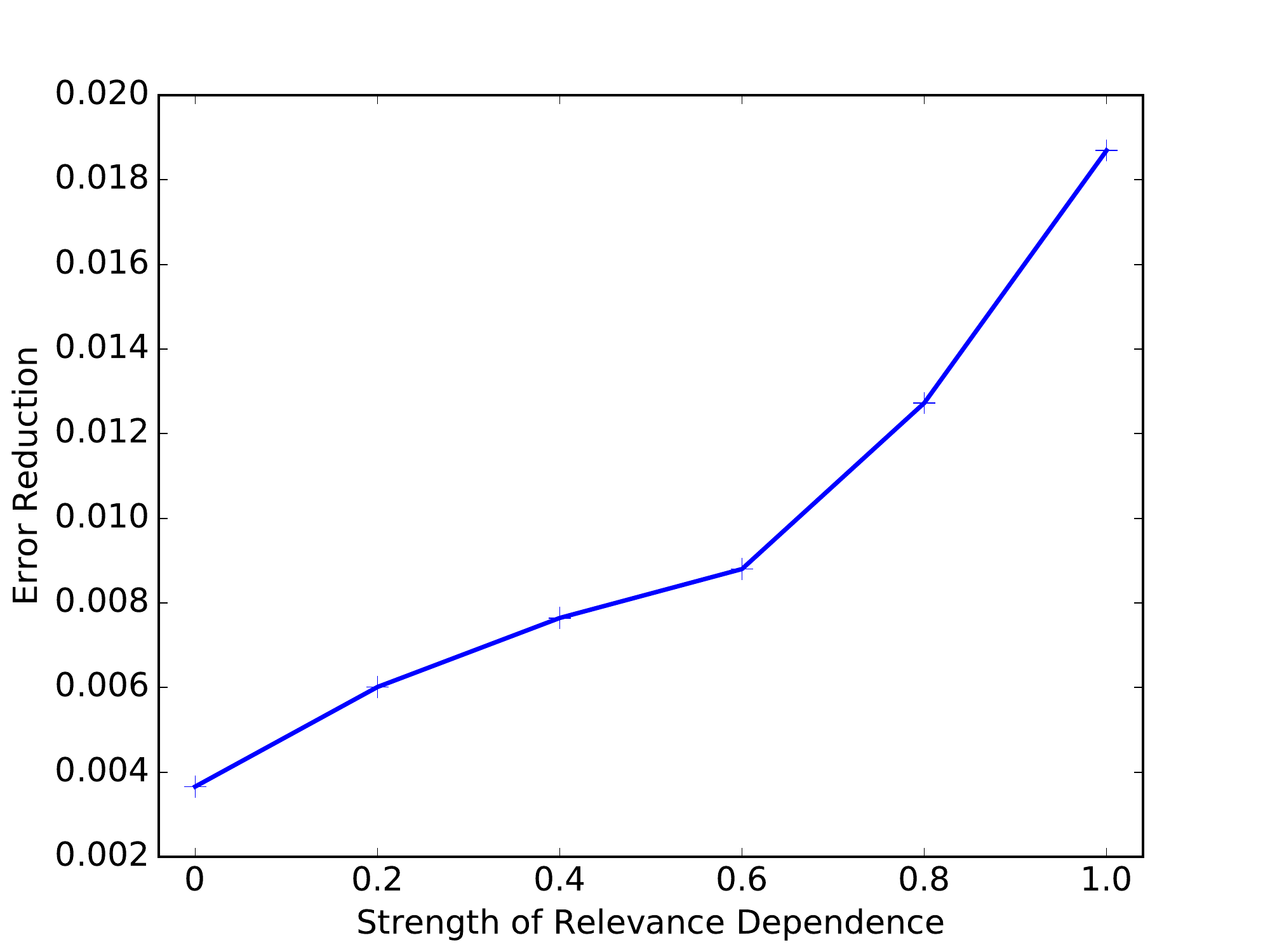}
\vspace*{-0.4cm}
\caption{Error reduction by incorporating a relevance model with increasing strength of relevance dependence $\zeta$ ($\#\text{Training queries} = 57365$, $\eta=1$).}
\label{fig:rel}
\vspace*{-0.1cm}
\end{figure}
Figure~\ref{fig:rel} shows the error reduction between the estimators under the CPBM with and without a context-dependent relevance model. 
For the CPBM with a context-dependent relevance model, we use the neural-network relevance model $g(k,k',x)$, and for the other one we simply use context-independent parameters $r_{k,k'}$ for each pair of ranks. 
To ensure statistical stability, we reran the experiment 20 times.
The error reduction provided by context-dependent relevance model increases when the context has increasing influence on the relevance profile.
With maximum decrease in error of only $0.02$, the context-dependent relevance model provides only a mild improvement to the accuracy of the estimates. This highlights the desirable fact that the relevance model $g(k,k',x)$ can be far less crucial than the query-document relevance model $g(q,d)$ in generative models.

\paragraph{\bf How accurate is the estimate at different positions in the ranking?} 
\begin{figure}[tb]
\centering
\vspace*{-0.3cm}\includegraphics[width=0.75\columnwidth]{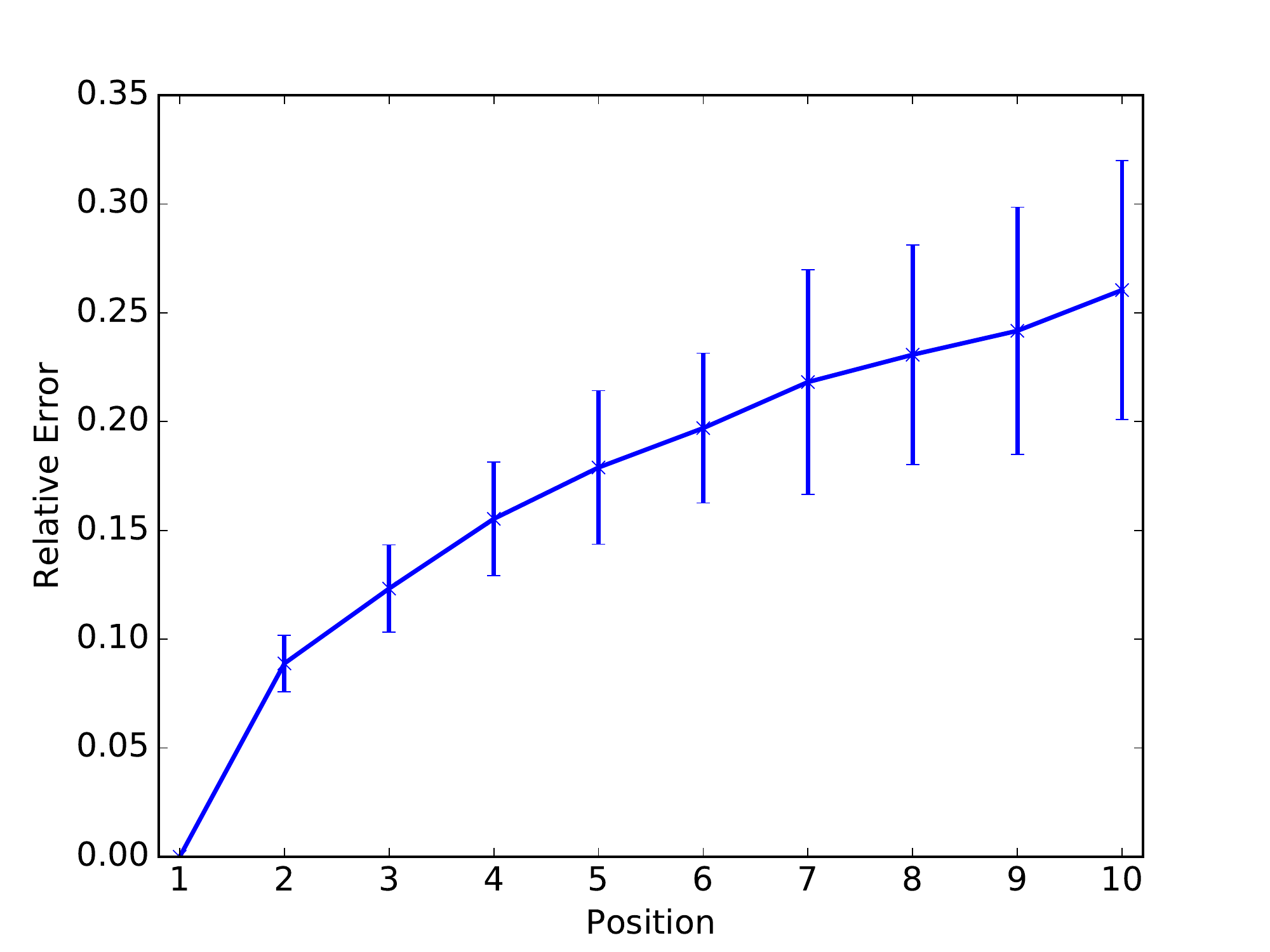}
\vspace*{-0.4cm}
\caption{Relative error of the CPBM at different positions in the ranking ($\#\text{Training queries} = 114730$, $\eta = 0.5$, $\zeta=1$).}
\label{fig:rank}
\vspace*{-0.2cm}
\end{figure}
Figure~\ref{fig:rank} shows the relative error of the CPBM at different positions in the ranking. As expected, the relative error increases with position, because lower-ranked documents have a smaller chance of receiving clicks and thus have less training data from intervention harvesting. Furthermore, the examination propensities at lower ranks are generally smaller, such that absolute deviations in the propensity estimates lead to larger contributions to our relative error metric.

\section{Conclusions}
We introduced the Contextual Position-Based Model (CPBM) to better capture the examination bias in interaction feedback from rankings.
The CPBM captures how examination changes with context, and we developed an estimator for learning a CPBM from implicit feedback data. 
The key idea is to harvest interventions from the logs of multiple historic rankers, which provides experimental control to eliminate confounding of relevance on examination. 
Plugging a neural network model into the estimator, we show how the CPBM and the estimator can effectively learn context-dependent examination models in simulation experiments and real-world experiments.

%
\begin{acks}
This research was supported in part by NSF Awards IIS-1615706 and IIS-1513692, as well as a gift from Google. All content represents the opinion of the authors, which is not necessarily shared or endorsed by their respective employers and/or sponsors.
\end{acks}

%
\bibliographystyle{ACM-Reference-Format}
\bibliography{sigir19-sigconf}

%
\end{document}